\title{Invisible pushdown languages}
\author{Eryk~Kopczy\'nski}
\begin{document}

\maketitle

\newcounter{mycount}[section]
\renewcommand{\themycount}{\thesection.\arabic{mycount}}

\newtheorem{theorem}[mycount]{Theorem}
\newtheorem{lemma}[mycount]{Lemma}
\newtheorem{definition}[mycount]{Definition}
\newtheorem{problem}[mycount]{Problem}

\newenvironment{proof}[1][]%
{\medskip{\bf Proof #1 }}%
{}
\def\qed{\hfill$\rule{2mm}{2mm}$\par\medskip}

\def\mod{{\rm{\ mod\ }}}
\def\ra{\rightarrow}
\def\la{\leftarrow}
\def\trans{\delta}
\def\rh{|}
\def\ut{\tau}
\def\emt{\emptyset}
\def\Port{{\rm{port}}}
\def\Aut{\mathcal A}
\def\Lang{\mathcal L}
\def\restrict{{\rm{Restrict}}}

\def\dom{{\rm{dom}}}
\def\bia{\{0,1\}}
\def\i{\hskip 2em}
\def\EOT{{\rm EOT}}
\def\pat{p}

\def\calC{{\mathcal C}}
\def\bbN{{\mathbb N}}
\def\PaV{V^P}
\def\PaE{E^P}
\def\SpV{{V^\tau}}
\def\SpE{{E^\tau}}
\def\SpR{{v^\tau}}

\def\bbN{{\mathbb N}}
\def\bbR{{\mathbb R}}
\def\bbP{{\mathbb P}}
\def\bbZ{{\mathbb Z}}
\def\bbC{{\mathbb C}}
\def\oa{\diamondsuit}
\def\ca{\overbar{\diamondsuit}}
\def\leaf{\spadesuit}

\newcommand{\overbar}[1]{\mkern 1.5mu\overline{\mkern-1.5mu#1\mkern-1.5mu}\mkern 1.5mu}

\begin{abstract}
Context free languages allow one to express data with hierarchical structure,
at the cost of losing some of the useful properties of languages recognized by 
finite automata on words. However, it is possible
to restore some of these properties by making the structure of the tree visible, 
such as is done by visibly pushdown languages, or finite automata on trees.
In this paper, we show that the structure given by such approaches remains
invisible when it is read by a finite automaton (on word).
In particular, we show
that separability with a regular language
is undecidable for visibly pushdown languages, just as it is undecidable for
general context free languages.
\end{abstract}

\section{Introduction}

Finite automata are a well known formalism for describing the simplest formal languages.
Regular languages -- ones which are recognized by finite automata -- have very nice
closure properties, such as decidability of most problems such as universality or
disjointness, equivalence of deterministic finite automata (DFA) and non-deterministic
finite automata (NFA), and closure under complement.

However, most programming and natural languages have to describe a hierarchical (tree)
structure, and finite automata on words are no longer appropriate.
To capture such a hierarchical structure,
Noam Chomsky proposed the classic notion of \emph{context free languages}.
Context free languages are recognized
by context free grammars (CFGs), or equivalently by pushdown automata (PDA).

However, context free languages do not have as good properties as regular ones --
for example, universality and disjointness are no longer decidable, deterministic
PDA are less powerful than non-deterministic ones, and they are not closed under
complement. These properties fail since, although
words from a context free language have an underlying tree structure, it is hard
to tell what this structure is just by looking at the word -- two completely different
derivation trees can yield a very similar output, consider for example the English
sentences \emph{Time flies like an arrow} and \emph{fruit flies like a banana}, 
or \emph{The complex houses married and single soldiers and their families} -- after
reading the four first words of the latter sentence, one could think that 
\emph{the complex houses} is the subject and \emph{married} is the verb, while in fact,
\emph{the complex} is the subject and \emph{houses} is the verb. This is also a big
problem in practical computer science, since such a possibility of incorrect parsing leads to many errors
-- one famous example is the SQL injection attack, which is based on fabricating
SQL queries which will be parsed incorrectly, allowing unauthorized access to a database.

There are two popular approaches to solve this. One of them is to use the 
\emph{finite automata on trees} (TFAs) \cite{tata}, which work on trees directly. Another one is to
use \emph{visibly pushdown automata} (VPDAs), also known as languages of \emph{nested
words} \cite{visibly}, where every symbol in our alphabet has a fixed type
with respect to the stack -- it either always pushes a new symbols, or always pops a
symbol, or it never pushes or pops symbols -- this property allows the tree structure to be easily
read. When we safely flatten a language on trees into a language of words -- XML is the
common and effective way to do that -- these two approaches are seen to be equivalent
(in some sense), and most properties of regular languages of words are retained --
non-deterministic and deterministic VPDAs and finite automata on trees are equivalent,
and universality and intersection problems are decidable. Hence,
representing our data as trees, instead of forcing a linear word structure,
definitely solves many problems -- both theoretical and practical -- efficiently.

In this paper, we show that not all problems are solved by these approaches. In
particular, we show that, informally,
although (flattened) TFAs and VPDAs are successful at making
the structure visible to powerful computation models such as
Turing machines, the structure still remains invisible to the simple
ones, such as finite automata on words. We use our technique to show that the following
problem is undecidable, just as in the usual ``invisible'' context free case
\cite{hunt, szymanski}:
given two VPDAs (or, equivalently, flattened TFAs) accepting languages $L_1$ and $L_2$
such that $L_1$ and $L_2$ are disjoint, is there a regular language $R$ such that
$R$ accepts all words from $L_1$, but no words from $L_2$?

A similar property
is also obtained for separating by other classes of languages, as long as the
corresponding problem for CFGs is undecidable, and the separating class has basic
closure properties and a pumping property -- the precise conditions are listed in the
sequel.
In \cite{hunt} it is shown that the separability problem of
context free languages is undecidable for any class which includes all
{\it definite} languages. On the other hand, 
it has been shown recently that the problem of separability of CFLs by
\emph{piecewise testable languages} is decidable \cite{wojtekczerwinski}.

Our method solves the following open problem, 
which has appeared on Rajeev Alur's website in early 2013 \cite{alur2013}:

\begin{quote}
A Challenging Open Problem

Consider the following decision problem: given two regular languages $L_1$ and $L_2$ of nested
words, does there exist a regular language $R$ of words over the tagged alphabet such that
Intersection($R$,$L_1$) equals $L_2$? [...]\end{quote}

We say that $L_2$ is a \emph{regular restriction} of $L_1$
iff the above holds.
Since disjoint languages $L_1$ and $L_2$ are separable iff $L_2$ is a regular
restriction of $L_1 \cup L_2$, and separability is undecidable,
restriction-regularity is undecidable too.

Rajeev Alur's question is inspired by \cite{streaming-pods}, where
it is shown that, for a fully recursive DTD, it is decidable whether
there is a regular language $R$ such that, for any valid XML document $w$,
it is decidable whether $w$ is valid with respect to $D$. This is a special case
of our result -- we take fully recursive DTDs instead of arbitrary finite automata
on trees, and we want to separate $L$ from its complement. It is stated as an open
problem in \cite{streaming-pods} whether the problem is still decidable for arbitrary
finite automata on trees.

On the other hand, in \cite{regvis} it is shown that it is decidable whether,
for a given visibly push-down language $L$, there exists a language $R$ such that
$R \cap F = L$, where $F$ is the language of all well matched words.

{\bf Acknowledgements.} Thanks to Charles Paperman for introducing me to
the separability problem for VPDAs, and for helping me with the references.

\section{Preliminaries}

We remind the basic notions of automata theory;  see \cite{hu79}.

For an alphabet $\Sigma$, $\Sigma^*$ denotes the set of words over $\Sigma$, and
$\epsilon$ denotes the empty word.

A \emph{deterministic finite automaton} (DFA) is a tuple $A = (\Sigma, Q, q_I, F, \delta$),
where $\Sigma$ is the alphabet of $A$, $Q$ is the set fo states, $q_I \in Q$ is the
initial state, $F \subseteq Q$ is the final state, and $\delta: Q \times \Sigma \ra Q$
is the transition function. We extend $\delta$ to $\delta: Q \times \Sigma^* \ra Q$
in the following way: $\delta(q,\epsilon) = q$, $\delta(q,wx) = \delta(\delta(q,w),x)$.

A \emph{context free grammar} (CFG) is a tuple $G = (V, \Sigma, R, S)$, where $V$ is the
set of non-terminal symbols, $\Sigma$ is the set of terminal symbols, $R$ is a set
of \emph{productions} of form $N \ra X_1 \ldots X_k$ where $N$ is a non-terminal symbol
and and each $X_i$ is either a terminal or non-terminal symbol, and $S \in V$ is
the start symbol. The language accepted by $G$, $L(G) \subseteq \Sigma^*$, is the set of
words which can be obtained from the start symbol $S$ by replacing non-terminal symbols
with words, according to the productions.

A \emph{binary flattened tree grammar} (BFG) over $\Sigma$ is a context free grammar, whose
set of terminal symbols is $\Sigma \cup \{\oa, \ca\}$, and every production is 
of form $N \ra t$, $N \ra \oa\ca$ or $N \ra \oa N_1 N_2 \ca$, where 
$N_1, \ldots, N_k$ are non-terminals, and $t$ is a terminal.
A BFG corresponds to the XML encoding of a 
regular language of trees ($\oa$ and $\ca$ correspond to the opening 
\verb:<a>: and closing \verb:</a>: tag, respectively), and languages recognized by
flattened tree grammars are visibly pushdown languages. These two facts are routine
to check -- we omit this to avoid having
to state the definitions of VPDAs and TFAs; we have decided to use flattened tree
grammars in this paper since they are easier to define than both of these formalisms.

\begin{definition}
We say that two languages $L_1$ and $L_2$ are \emph{separable} if there is a 
regular language $R$ such that for each $w \in L(G_1)$, $w \in R$,
but for each $w \in L(G_2)$, $w \notin R$.
\end{definition}

\begin{problem}[CFG-SEPARABILITY] \label{cflsep} {\ }

{\bf INPUT} Two context tree grammars $G_1$ and $G_2$ such that $L(G_1)$ and $L(G_2)$
are disjoint

{\bf OUTPUT} Are $L(G_1)$ and $L(G_2)$ separable?
\end{problem}

The CFL separation problem is known to be undecidable \cite{hunt, szymanski}. For
convenience, we include the idea of the proof here. Encode configurations of a 
Turing machine $M$ as words, and say that $w_1 \ra w_2$ iff a machine
in configuration $w_1$ reaches the configuration $w_2$ in next step. It can be
easily shown 
that (for simple encodings)
the languages $\{w_1 \# w_2^R: w_1 \ra w_2\}$ and $\{w_1^R \# w_2: w_1 \ra w_2\}$
are context free, and thus the languages $L_1$ and $L_2$ below are also context free.
They are separable iff $M$ terminates from the initial configuration $w_I$.
\begin{eqnarray*}
L_1 &=& \{w_1 \# w_2 \# \ldots w_2k \# a^{2k} : w_1 = w_I, w_{2i-1} \ra w_{2i}^R \} \\
L_2 &=& \{w_1 \# w_2 \# \ldots w_2k \# a^{k} : w_1 = w_I, w^R_{2i} \ra w_{2i+1} \}
\end{eqnarray*}

\begin{problem}[BFG-SEPARABILITY] \label{flattensep} {\ }

{\bf INPUT} Two BFGs $G_1$ and $G_2$ such that $L(G_1)$ and $L(G_2)$ are disjoint

{\bf OUTPUT} Are $L(G_1)$ and $L(G_2)$ separable?
\end{problem}

Our main result is the following:

\begin{theorem} \label{tflattensep}
The problem {\bf BFG-SEPARABILITY} is undecidable.
\end{theorem}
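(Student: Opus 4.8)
Given two context free grammars $G_1, G_2$ with disjoint languages, I would construct two BFGs $G_1', G_2'$ so that $L(G_1), L(G_2)$ are separable by a regular language if and only if $L(G_1'), L(G_2')$ are. The conceptual heart of the reduction is that a BFG is just a CFG forced to emit its derivation tree in XML-flattened form, with opening and closing tags $\oa, \ca$ marking the tree structure. So the natural construction takes a derivation of $w$ in $G_i$ and decorates it with these tags, producing a well-matched word $\hat w$ over $\Sigma \cup \{\oa,\ca\}$ that encodes both $w$ and its parse tree. The paper's slogan --- that the tree structure remains \emph{invisible} to finite automata --- is exactly the statement that adding these visible tags does not help a DFA tell the two languages apart.

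\textbf{First I would put $G_1$ and $G_2$ into a normal form} (essentially Chomsky normal form, allowing unary leaf productions $N \ra t$ and the empty production $N \ra \oa\ca$) so that every production has one of the three shapes permitted in a BFG: $N \ra t$, $N \ra \oa\ca$, or $N \ra \oa N_1 N_2 \ca$. Converting an arbitrary CFG to such a form is routine and preserves the language up to the standard encoding. This directly yields BFGs $G_1'$ and $G_2'$ whose flattened languages $L(G_i')$ consist of XML encodings of parse trees of words in $L(G_i)$. Because $L(G_1)$ and $L(G_2)$ are disjoint and the flattening is injective on the underlying words, $L(G_1')$ and $L(G_2')$ are disjoint as well, so $G_1', G_2'$ form a legal input to BFG-SEPARABILITY.

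\textbf{The crux is the equivalence of separability}, and it must be argued in both directions. The easy direction is that if the original flat languages $L(G_1), L(G_2) \subseteq \Sigma^*$ are separated by a regular $R$, then the ``tag-erasing'' preimage of $R$ --- which is still regular, since projection/inverse-homomorphism preserves regularity --- separates $L(G_1')$ from $L(G_2')$. The genuinely delicate direction is the converse: I must show that a DFA reading the tagged words $\oa, \ca$ gains no real power, i.e.\ that any regular $R'$ separating $L(G_1')$ from $L(G_2')$ can be converted back into a regular separator for $L(G_1), L(G_2)$. This is where the paper's central ``invisibility'' phenomenon is needed, and I expect it to be the main obstacle: a finite automaton can in principle use the bracket symbols to track nesting depth up to its number of states, so one cannot simply erase the tags. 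The argument will have to exploit a pumping property --- showing that on the highly repetitive words produced by the separability construction (the $w_1\#w_2\#\cdots$ blocks with controlled $a^{2k}$ versus $a^k$ tails), the bounded memory of a DFA cannot exploit the bracket structure to distinguish the two families, so that a separator at the tagged level forces a separator at the untagged level.

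\textbf{Consequently,} since CFG-SEPARABILITY is undecidable (Problems~\ref{cflsep}, via \cite{hunt, szymanski}) and the reduction above is effective, BFG-SEPARABILITY is undecidable, proving Theorem~\ref{tflattensep}. I would likely streamline the whole argument by working directly with the specific languages $L_1, L_2$ from the proof of CFG-SEPARABILITY rather than an arbitrary reduction, since those languages already have the rigid block structure that makes the pumping argument tractable; the main technical work is then to re-encode those particular languages as BFGs and to verify that visibility of the brackets does not defeat the reduction.
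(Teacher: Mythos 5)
There is a genuine gap, and it sits exactly where the theorem's actual content lies. Your construction of $G_i'$ is the \emph{naive} encoding: you put $G_i$ into Chomsky-like normal form and bracket each binary production as $N' \ra \oa N_1' N_2' \ca$. The paper's construction additionally introduces a dummy nonterminal $E'$ with productions $E' \ra \oa\ca$ and $E' \ra \oa E' E' \ca$, together with $X' \ra \oa E' X' \ca$ and $X' \ra \oa X' E' \ca$ for every nonterminal $X$. These extra rules are not cosmetic: they let one insert arbitrary ``empty'' subtrees anywhere in the derivation, and this freedom is what the entire hard direction rests on. Concretely, given a regular separator $R$ for $L(G_1')$ and $L(G_2')$, the paper takes the idempotent power $\omega$ of the syntactic semigroup of $R$, uses the $E'$-rules to derive $N' \ra^* e_L K' e L' e_R$ for explicit pumped bracket words $e_L, e, e_R$, and verifies congruences such as $e_L e_L \equiv e_L$ and $e e_L \equiv e$ modulo the syntactic equivalence of $R$. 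These identities collapse the flattening of \emph{any} derivation tree of $w$ to a single canonical padded word $T(w) = e_L t_1 e t_2 e \cdots e t_n e_R$ depending only on $w$, so $R$ separates $T(L(G_1))$ from $T(L(G_2))$, and a separator for the untagged languages follows by simulating $R$ on the padding. Without the $E'$ productions there is no way to pump the brackets, no canonical representative, and no argument: distinct parse trees of the same word produce genuinely different bracket patterns that a DFA may treat differently, and nothing you have written rules that out.

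Your proposal correctly identifies the reduction target, the normal form, the easy direction via inverse homomorphism, and even the right intuition (``a pumping property must show the DFA cannot exploit the brackets''), but the converse direction is stated as a hope rather than proved, and the grammar you build does not contain the machinery needed to prove it. Your closing suggestion --- to work directly with the specific languages $L_1, L_2$ from the CFG-SEPARABILITY proof --- does not help, because the difficulty is not the shape of the underlying flat languages but the fact that the bracketing exposes the derivation tree; the paper's solution is to drown that information in pumpable dummy brackets, independently of which CFGs are being reduced.
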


\section{Proof}

We will reduce {\bf CFG-SEPARABILITY} to {\bf BFG-SEPARABILITY}. To do this,
we will take two CFGs $G_1$ and $G_2$, and create two BFGs $G'_1$ and $G'_2$
such that $G'_1$ and $G'_2$ are separable iff $G_1$ and $G_2$ are.

Without loss of generality, we can assume that grammars $G_i$ do not accept the
empty word, or any word of length 1. We can also assume that these grammars
are in the \emph{Chomsky normal form}, that is, each production is of form
$N \ra N_1 N_2$ or $N \ra t$, where $N$, $N_1$ and $N_2$ are non-terminals,
and $t$ is a terminal. It is well known that any context free grammar is effectively
equivalent to a grammar in Chomsky normal form \cite{hu79}.

Given a context free grammar $G = (V, \Sigma, R, S)$
in Chomsky normal form, we will construct a 
binary flattened
tree grammar $G' = (V', \Sigma', R', S')$, in the following way:

\begin{itemize}
\item For each $X \in V$, we have a non-terminal $X'$. We also
have one special non-terminal $E'$. The starting symbol of $G'$ is $S'$.

\item For each production $N \ra t$ in $R$, we have the corresponding production in $R'$:
\begin{equation}
N' \ra t \label{gp_terminal}
\end{equation}

\item For each production $N \ra N_1 N_2$ in $R$, we have the corresponding
bracketed production in $R'$:
\begin{equation}
N' \ra \oa N_1' N_2' \ca \label{gp_bracketed}
\end{equation}

\item For each $X \in V$, we also have the following productions for $X'$:
\begin{eqnarray}
X' &\ra& \oa E' X' \ca \label{gp_leftempty} \\
X' &\ra& \oa X' E' \ca \label{gp_rightempty}
\end{eqnarray}

\item Where the productions for $E'$ are as follows:
\begin{eqnarray}
E' &\ra& \oa \ca \label{gp_leaf} \\
E' &\ra& \oa E' E' \label{gp_fork} \ca
\end{eqnarray}

\end{itemize}

Consider $\pi: \Sigma' \ra \Sigma$, the homomorphism which simply
removes the structural symbols $\oa$ and $\ca$. By applying
$\pi$ to all the production rules for $G'$, we obtain a grammar $\pi(G')$ which
accepts exactly $\pi(L(G))$. It is straightforward to check that $\pi(G')$ is
in fact equivalent to $G$ -- the only difference is that $E'$ is inserted in some
places, but all words generated by $E'$ reduce to the empty word after applying $\pi$.
Therefore, $\pi(L(G'_i))$ equals $L(G_i)$, which makes the following straightforward:

\begin{lemma}\label{easydir}
If $L(G_1)$ and $L(G_2)$ are separable, then so are $L(G_1')$ and $L(G_2')$.
\end{lemma}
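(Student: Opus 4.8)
The plan is to obtain a separator for $L(G_1')$ and $L(G_2')$ by pulling back, through the homomorphism $\pi$, the regular language that separates $L(G_1)$ and $L(G_2)$. By assumption there is a regular language $R \subseteq \Sigma^*$ with $L(G_1) \subseteq R$ and $R \cap L(G_2) = \emt$. I would set $R' = \pi^{-1}(R) = \{w \in (\Sigma')^* : \pi(w) \in R\}$ and claim that $R'$ separates $L(G_1')$ from $L(G_2')$.

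First I would observe that $R'$ is regular, since regular languages are closed under inverse homomorphism: the preimage of a regular language under $\pi$ is again regular. Concretely, one runs the DFA for $R$ on the $\pi$-image of each input letter, treating the structural symbols $\oa$ and $\ca$ as $\epsilon$-moves; this keeps the state set finite. Second, I would use the identity $\pi(L(G_i')) = L(G_i)$, already established in the paragraph preceding the lemma, to verify the two required properties. If $w \in L(G_1')$ then $\pi(w) \in L(G_1) \subseteq R$, so $w \in \pi^{-1}(R) = R'$, giving $L(G_1') \subseteq R'$. If $w \in L(G_2')$ then $\pi(w) \in L(G_2)$, which is disjoint from $R$, so $\pi(w) \notin R$ and hence $w \notin R'$, giving $R' \cap L(G_2') = \emt$. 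Together these say exactly that $R'$ is a separator.

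This is the \emph{easy} direction, and there is no real obstacle: all the content is carried by the already-verified equation $\pi(L(G_i')) = L(G_i)$ together with the standard closure of the regular languages under inverse homomorphism. The only point deserving a moment's care is that $\pi$ is an \emph{erasing} homomorphism (it sends $\oa$ and $\ca$ to $\epsilon$), so one must check that the preimage construction handles erasing letters; but erasing a letter corresponds to an $\epsilon$-transition in the automaton for $R'$, which preserves finiteness of the state space and hence regularity. The genuinely hard work lies in the converse implication, which this lemma does not address.
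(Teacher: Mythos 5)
Your proof is correct and follows exactly the paper's argument: both take the separator $\pi^{-1}(R)$, justify its regularity by closure under inverse homomorphism (the automaton simply ignores the structural symbols $\oa$ and $\ca$), and conclude via the identity $\pi(L(G_i')) = L(G_i)$. You merely spell out the two containment checks that the paper leaves implicit.
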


\begin{proof}
$\pi^{-1}(R)$ is a regular language
which separates $L(G_1')$ and $L(G_2')$ -- in other words, the automaton separating
these two languages works exactly as the one separating $L(G_1)$ and $L(G_2)$
(it just ignores all the closing and structural symbols). \qed
\end{proof}

The rest of this section will prove the other direction:

\begin{theorem}\label{harddir}
If $L(G'_1)$ and $L(G'_2)$ are separable, then so are $L(G_1)$ and $L(G_2)$.
\end{theorem}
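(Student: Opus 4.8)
The plan is to turn a separating automaton for the encoded languages back into one for the originals, using the padding productions \eqref{gp_leftempty}--\eqref{gp_fork} as the essential tool. Fix a DFA $A' = (\Sigma', Q, q_I, F, \delta')$ with $L(A') = R'$, where $R'$ is the assumed separator, so that $L(G_1') \subseteq R'$ and $L(G_2') \cap R' = \emptyset$. Let $M$ be its transition monoid and $h : \Sigma'^* \ra M$ the transition morphism, so that $R' = h^{-1}(P)$ for the set $P \subseteq M$ of transformations carrying $q_I$ into $F$; note that the content symbols of $\Sigma$ act through $h$ exactly as they do in $A'$. It suffices to build a regular language $R \subseteq \Sigma^*$ with $L(G_1) \subseteq R$ and $L(G_2) \cap R = \emptyset$. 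The key structural observation is that the preimages are extremely rich: for $w \in L(G_i)$, the language $L(G_i')$ contains every word obtained from a derivation tree of $w$ in $G_i$ by inserting, via \eqref{gp_leftempty}--\eqref{gp_fork}, arbitrarily large $E'$-padding around any non-terminal, and all such words lie in $R'$ (respectively outside $R'$ for $i=2$).

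The combinatorial heart is a saturation lemma for this padding. Let $U = \{ h(e) : E' \Rightarrow^* e \} \subseteq M$ be the finite set of transformations realised by $E'$-subtrees; it contains $h(\ca)\,h(\oa)$ and is closed under $(f,g) \mapsto h(\ca)\,g\,f\,h(\oa)$. For a non-terminal $N$ of $G_i$, let $V_i(N') \subseteq M$ be the set of transformations $h(u)$ over all subtrees $u$ with $N' \Rightarrow^* u$; it is the least set containing $h(t)$ for each production $N \ra t$ and $h(\ca)\,f_2\,f_1\,h(\oa)$ for each $N \ra N_1 N_2$ with $f_j \in V_i(N_j')$, and closed under the two padding operators $f \mapsto h(\ca)\,f\,g\,h(\oa)$ and $f \mapsto h(\ca)\,g\,f\,h(\oa)$ for $g \in U$. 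The separation hypotheses are then exactly $V_1(S_1') \subseteq P$ and $V_2(S_2') \cap P = \emptyset$. Using finiteness of $M$, I would extract idempotent padding transformations and show that iterating a padding operator on any $f$ reaches, after boundedly many steps, a stable value; the point is that idempotent $E'$-padding lets bracket pairs be slid and absorbed, so that the contribution of the unbounded nesting collapses to a bounded amount of information attached to the yield.

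Granting the saturation lemma, I would build the separator by a left-to-right simulation. The idea is to define a finite automaton $B$ over $\Sigma$ that, on input $w = t_1 \ldots t_n$, tracks the finite saturated $A'$-behaviour of a maximally padded partial parse: its state records the bounded data surviving the collapse of the bracket nesting, updated by $h(t_j)$ interleaved with the fixed idempotent padding transformations. Equivalently, one defines a congruence of finite index on $\Sigma^*$ whose classes determine, for each yield, the saturated value in $M$ produced by a padded $G_i'$-encoding. Setting $R = L(B)$ makes $R$ regular by construction.

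It remains to verify the two inclusions, and here the disjointness of $L(G_1)$ and $L(G_2)$ is what prevents a conflict. For $w \in L(G_1)$, choose any derivation tree of $w$ in $G_1$ and pad it to the saturated form recognised by $B$; the resulting word lies in $L(G_1') \subseteq R'$, its transformation is in $P$, and by the saturation lemma this is precisely the value tracked by $B$, so $B$ accepts and $w \in R$. For $w \in L(G_2)$, every saturated padded encoding of $w$ is a word of $L(G_2')$, hence rejected by $R'$, so its transformation avoids $P$; again this matches the value tracked by $B$, so $B$ rejects and $w \notin R$. Since no word lies in both $L(G_1)$ and $L(G_2)$, these two requirements never collide, and $R$ separates them. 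I expect the saturation lemma -- eliminating the unbounded stack of the bracket structure by absorbing it into idempotent padding -- to be the main obstacle; the simulation and the verification are then routine bookkeeping in the finite monoid $M$.
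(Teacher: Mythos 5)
Your overall architecture coincides with the paper's: both arguments reduce the theorem to showing that the bracket/padding contribution of an arbitrary derivation tree collapses, modulo the syntactic congruence of the separator, to a fixed flat word $e_L t_1 e t_2 \cdots e t_n e_R$ depending only on the yield, and then simulate the given automaton on that flat word to obtain an automaton over $\Sigma$ (your $B$ is exactly the automaton $A''$ built in Lemma \ref{translemma}). The genuine gap is that this collapse --- your ``saturation lemma'' --- is precisely the paper's Lemma \ref{pumplemma}, which carries the entire mathematical weight of the theorem, and you have neither proved it nor sketched a route that would work. Your stated mechanism is false as given: iterating a fixed operator $f \mapsto afb$ in a finite monoid produces the ultimately \emph{periodic} sequence $a^k f b^k$, not one that ``reaches a stable value after boundedly many steps'' (take $M = \bbZ/2\bbZ$, $a=1$, $b=0$). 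One must pass to $\omega$-powers to reach idempotents; and even then the object to be absorbed is not a single iterated operator but the tree-dependent, unboundedly long run of closing-then-opening brackets sitting between consecutive terminals $t_j$ and $t_{j+1}$, whose length is the parse-dependent depth of those leaves.

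A second concrete obstruction is that the grammar does not let you pad one side freely: productions \eqref{gp_leftempty}--\eqref{gp_fork} insert $E'$-material only together with a surrounding bracket pair, so every padding step adds one $\oa$ on the left \emph{and} one $\ca$ on the right of the whole subtree. The paper must manufacture one-sided absorbers from $\oa\oa\ca$ and $\oa\ca\ca$ and combine them into the words $b_1,b_2,c_1,c_2$ with carefully tuned exponents $\nu=\omega-1$ and $\omega$ (found by computer search, as the paper itself notes) precisely so that the absorption identities $e_Le_L\equiv e_L$, $ee_L\equiv e$, $e_Re\equiv e$, $e_Re_R\equiv e_R$ hold. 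These identities are what make the collapsed value independent of the derivation tree; without them your automaton $B$ is not even well defined, since different parses of the same $w$ would yield different monoid values, and your verification of the two inclusions presupposes that they agree. Until the saturation lemma is actually established --- by an explicit construction of $e_L,e,e_R$ or a genuinely new abstract argument replacing it --- the proposal does not prove the theorem.
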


Assume that $L(G'_1)$ and $L(G'_2)$ are separable. Therefore, there is a finite automaton
$A$ such that $R = L(A)$ accepts all words from $L(G'_1)$, but no words from $L(G'_2)$. We say
that two words $w_1, w_2 \in \Sigma'^*$ are {\it syntactically equivalent} with respect to $R$
iff for any words $v, x \in \Sigma'^*$, we have $vwx\in R$ iff $vw'x \in R$.
Syntactic equivalence is a congruence with respect to concatenation.

\begin{lemma}\label{semigroup}
There is a number $\omega \in \bbN$ such that for any $w \in \Sigma'^*$,
$w^\omega$ is syntactically equivalent to $w^{2\omega}$ with respect to $R$.
\end{lemma}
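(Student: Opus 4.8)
The plan is to exploit that $R$ is regular, so syntactic equivalence has only finitely many classes, and then invoke the standard fact that in a finite monoid every element has an idempotent power that can be reached by a single, uniform exponent.

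First I would observe that the syntactic classes form a \emph{finite} monoid $M = \Sigma'^*/{\sim}$, where $\sim$ denotes syntactic equivalence with respect to $R$. Indeed, the map sending $w$ to its transition function $q \mapsto \delta(q,w)$ of the separating automaton $A$ lands in the finite set $Q^Q$, and any two words inducing the same transition function are syntactically equivalent; hence $|M| \le |Q|^{|Q|}$. Since $\sim$ is a congruence with respect to concatenation, concatenation descends to a well-defined product on $M$, and writing $[w]$ for the class of $w$ we have $[w^k] = [w]^k$. Consequently $w^\omega \sim w^{2\omega}$ holds if and only if $[w]^\omega = [w]^{2\omega}$, i.e.\ if and only if $[w]^\omega$ is idempotent. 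It therefore suffices to exhibit a single $\omega$ for which $m^\omega$ is idempotent for every $m \in M$ simultaneously.

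Next I would run the usual finite-semigroup computation. For a fixed $m \in M$, the $|M|+1$ powers $m^1, \ldots, m^{|M|+1}$ cannot all be distinct, so there are indices $1 \le s_m \le |M|$ and $1 \le p_m \le |M|$ with $m^{s_m + p_m} = m^{s_m}$, and hence $m^{k+p_m} = m^k$ for every $k \ge s_m$. Taking $\omega = |M|!$ gives, for \emph{every} $m$ at once, both $\omega \ge |M| \ge s_m$ and $p_m \mid \omega$. Writing $\omega = p_m t$ and applying the period relation $t$ times (the exponent never drops below $s_m$) yields $m^{2\omega} = m^{\omega + \omega} = m^\omega$. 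Setting $m = [w]$ then gives $w^\omega \sim w^{2\omega}$, which is the claim.

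The only substantive point is \emph{uniformity}: a single $\omega$ must work for all $w$, despite each class $m$ having its own index $s_m$ and period $p_m$. This is exactly what the factorial buys us, since $|M|!$ is a common multiple of all the periods $p_m$ and also dominates all the indices $s_m$; everything else is routine bookkeeping. I do not expect any genuine obstacle here, as this is standard syntactic-monoid theory; the step to watch is simply confirming the bounds $s_m, p_m \le |M|$ so that $\omega = |M|!$ has the two required properties at once.
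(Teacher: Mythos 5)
Your proposal is correct and follows essentially the same route as the paper: pass to the finite syntactic monoid (bounded via the transition functions in $Q^Q$) and take $\omega = |M|!$, which simultaneously dominates all indices and is a common multiple of all periods. The paper's proof is just a terser version of the same ultimately-cyclic-power argument.
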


\begin{proof}
The set $S$ of all the equivalence classes
is a semigroup with concatenation as the operation. This semigroup is called the
\emph{syntactic semigroup} of A, and it is finite -- if for 
two words $w_1$ and $w_2$ we have $\delta(q,w_1) = \delta(q,w_2)$ for each $q \in Q$,
then they are syntactically equivalent.
For any finite semigroup $(S, \cdot)$, there is a number $\omega \in \bbN$ such that 
for any $s \in S$, we have $s^{2\omega} = s^\omega$ -- since $k \mapsto s^k$
yields an ultimately cyclic sequence with period at most $|S|$, 
$\omega = |S|!$ will work. \qed
\end{proof}

We say that $T: \Sigma^* \ra \Sigma'^*$ is a \emph{padding} iff
there exist $e_L, e, e_R \in (\Sigma'-\Sigma)^*$ such that,
for any $w=t_1 \ldots t_n \in \Sigma^*$,
$T(w)$ is the word $e_L t_1 e t_2 e \ldots e t_n e_R$.

\begin{lemma}\label{translemma}
For a padding $T$, 
the languages $L(G_1)$ and $L(G_2)$ are separable, iff $T(L(G_1))$ and $T(L(G_2))$ are.
\end{lemma}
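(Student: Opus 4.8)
The plan is to prove Lemma~\ref{translemma} by showing both directions of the equivalence, exploiting the fact that a padding $T$ inserts only fixed blocks $e_L, e, e_R$ of structural (non-$\Sigma$) symbols around and between the original letters. The key observation is that a padding is essentially an injective, ``regularity-preserving'' encoding of $\Sigma^*$ into $\Sigma'^*$: the positions where original letters sit are rigidly determined, so a finite automaton can translate back and forth between the padded and unpadded pictures.

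\begin{proof}
Suppose first that $L(G_1)$ and $L(G_2)$ are separable by a regular language $R \subseteq \Sigma^*$, recognized by a DFA $A$. I would build a DFA $A'$ over $\Sigma'$ recognizing a language $R'$ that separates $T(L(G_1))$ from $T(L(G_2))$. The automaton $A'$ simulates $A$ but, reading a padded word, it expects to see the fixed blocks $e_L$, then $t_1$, then $e$, then $t_2$, and so on, feeding only the letters $t_i \in \Sigma$ into the simulated copy of $A$ and consuming the structural blocks $e_L, e, e_R$ by hard-wired transitions (rejecting on any deviation from the padding shape). Then $T(w) \in R'$ iff $w \in R$, so $R'$ correctly separates the two padded languages. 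This direction is routine: it is just $R' = T(R)$ viewed as a regular set, and since $T$ is a fixed padding, $T(R)$ is regular.

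For the converse, suppose $T(L(G_1))$ and $T(L(G_2))$ are separated by a regular $R' \subseteq \Sigma'^*$, recognized by a DFA $A'$. Here I would go the other way and define $R = T^{-1}(R') = \{w \in \Sigma^* : T(w) \in R'\}$. To see $R$ is regular, build a DFA $A$ over $\Sigma$ that, on reading $t_1 \ldots t_n$, simulates $A'$ on the word $e_L t_1 e t_2 e \ldots e t_n e_R$: concretely, $A$ starts by running $A'$ through the block $e_L$, then on each input letter $t_i$ it first feeds $t_i$ and then the block $e$ (except that for the final letter it must instead feed $e_R$). The only subtlety is handling the last letter, since $A$ does not know in advance which letter is last; the standard fix is to let $A$ track, in its state, the $A'$-state reached after each input letter \emph{without} yet committing to $e$ versus $e_R$, and to compute acceptance by checking whether feeding $e_R$ from the current tracked state lands in a final state of $A'$. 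Since $e$ and $e_R$ are fixed words, both continuations are functions of the current state, so finitely much information suffices and $A$ is a well-defined DFA. By construction $w \in R$ iff $T(w) \in R'$, and because $T$ is injective and $T(L(G_i)) \subseteq R'$ resp.\ $T(L(G_i)) \cap R' = \emptyset$, the language $R$ separates $L(G_1)$ from $L(G_2)$.

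The main obstacle, and the only place requiring care, is the asymmetry between the internal separators $e$ and the final block $e_R$ in the backward direction: a left-to-right automaton cannot tell when it is reading the last letter. The plan handles this by deferring the choice of separator into the acceptance condition rather than the transition function, which is exactly the kind of bookkeeping finite automata can do. Everything else is a direct simulation, and the equivalence of the separability statements follows since $T$ is injective and maps each $L(G_i)$ into the corresponding padded language. \qed
\end{proof}
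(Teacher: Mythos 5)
Your proof is correct and follows essentially the same route as the paper: the backward direction is the same simulation (absorb $e_L$ and the internal blocks $e$ into the transition function, defer $e_R$ into the acceptance condition), and the forward direction differs only cosmetically (you use $T(R)$ where the paper uses the simpler $\pi^{-1}(R)$, the preimage under the homomorphism erasing the structural symbols; both separate the padded languages).
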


\begin{proof}
The forward direction is straightforward, and proven just as Lemma \ref{easydir}
-- the automaton simply ignores all the symbols from $\Sigma'-\Sigma$ (in fact,
the stronger version of this lemma where $e_L, e, e_R \in \Sigma'^*$ is also true).

For the backward direction, we take the DFA $A' = (\Sigma', Q, q_I, F, \delta)$.
We can assume that there are no transitions to the initial state $q_I$ in $A'$ --
otherwise, we create a copy of $q_I$ and make it the new initial state.

We construct a new DFA $A'' = (\Sigma, Q, q_I, F', \delta')$ in the following way: take
$\delta'(q_I,t) = \delta(q_I, e_L t)$, and $\delta'(q,t) = \delta(q, et)$ for $q \neq q_I$.
For $F'$ we take the set of states $q$ such that $\delta(q,e_R) \in F$. 
The
automaton $A''$ working on $w \in \Sigma^*$ simulates the automaton $A'$ working
on $T(w)$, hence it accepts $w$ iff $A'$ accepts $T(w)$. \qed
\end{proof}

\begin{lemma}\label{pumplemma}
There is a padding $T$ with the following property:
for each $w \in L(G_i)$, there is a word $w' \in L(G'_i)$
which is equivalent to $T(w)$ with respect to $R$.
\end{lemma}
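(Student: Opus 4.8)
The plan is to turn a derivation of $w$ in $G_i$ into a heavily padded derivation in $G'_i$ whose bracket blocks are all in an idempotent ``canonical form'', and then read off $e_L,e,e_R$ from that form. First I would fix $w=t_1\ldots t_n\in L(G_i)$ (recall $n\ge 2$) together with a derivation tree in $G_i$; since $G_i$ is in Chomsky normal form this tree is binary, and its plain image under the productions (\ref{gp_terminal})--(\ref{gp_bracketed}) is a word $w'_0\in L(G'_i)$ of the shape $m_0\,t_1\,g_1\,t_2\,\ldots\,g_{n-1}\,t_n\,m_1$, where $m_0,m_1$ and each $g_j$ are words over $\{\oa,\ca\}$ (closing brackets that climb out of $t_j$ followed by opening brackets that descend to $t_{j+1}$) whose lengths and shapes depend on the tree.

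The point of the empty non-terminal $E'$ is that productions (\ref{gp_leftempty})--(\ref{gp_fork}) let me enlarge these bracket blocks while staying inside $L(G'_i)$ and without changing the projection to $w$: wrapping any non-terminal occurrence $X'$ by $X'\ra\oa E' X'\ca$ or $X'\ra\oa X' E'\ca$ inserts an $E'$-subtree (which $\pi$ erases) together with one extra matching bracket pair, and $E'$ itself generates empty trees of every size. Using these I would pad $w'_0$ into some $w'\in L(G'_i)$, still of the form $f_0\,t_1\,f_1\,\ldots\,f_{n-1}\,t_n\,f_n$ with each $f_j\in\{\oa,\ca\}^*$, arranged so that every block is brought to a fixed canonical pattern raised to the power $\omega$ of Lemma \ref{semigroup}: the two margins to canonical words and every interior block to one common canonical word. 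I then let $e_L$, $e$, $e_R$ be these three canonical words, which defines the padding $T$; note that $e_L,e,e_R\in(\Sigma'-\Sigma)^*$ as required, and that the same $T$ serves both $i=1,2$ since it depends only on $\omega$ and on the bracket alphabet $\{\oa,\ca\}$. Since after applying Lemma \ref{semigroup} each block $f_j$ is syntactically equivalent to the corresponding $e_L$, $e$, or $e_R$, and since syntactic equivalence is a congruence, $w'$ is equivalent to $e_L\,t_1\,e\,t_2\,\ldots\,e\,t_n\,e_R=T(w)$, which is exactly the claim.

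The hard part will be the padding itself, because the insertions are not local: every bracket I add on one side of a subtree is forced to appear with a matching bracket elsewhere, so I cannot adjust the blocks $f_j$ one at a time and must instead choose a single uniform padding scheme that drives all of them to the same canonical shape simultaneously. Concretely, the climbing and descending counts in each $g_j$ have residues modulo $\omega$ that are dictated by the tree depths, and these must be corrected together with the $E'$-baggage that each correction drags into neighbouring blocks. The reason this is nonetheless possible is precisely Lemma \ref{semigroup}: once a block is made long enough and its shape is an $\omega$-th power, only its coarse ($\omega$-periodic) structure survives syntactic equivalence, so it suffices to control bracket counts and inserted empty trees up to the idempotent law $u^\omega\equiv u^{2\omega}$ rather than exactly. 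I would carry out this bookkeeping by inserting empty layers along the spines to make all climbing and descending runs multiples of $\omega$ and by padding each letter with a fixed idempotent empty-tree gadget on both sides, and then verify that the resulting $f_0,\ldots,f_n$ have the three intended canonical types. Combined with Lemma \ref{translemma} and the separating automaton for $R$, this yields Theorem \ref{harddir}.
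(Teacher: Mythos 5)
Your high-level plan coincides with the paper's: replay the $G_i$-derivation inside $G'_i$, use productions (\ref{gp_leftempty})--(\ref{gp_fork}) to pad the bracket blocks, and invoke the idempotent power $\omega$ of Lemma \ref{semigroup} to collapse every block to one of three canonical words $e_L, e, e_R$. But there is a genuine gap exactly where you locate the ``hard part'': you never exhibit the canonical words, and you assert rather than prove that the blocks can be driven to them. This is not routine bookkeeping. The only insertions available are $X'\ra\oa E'X'\ca$ and $X'\ra\oa X'E'\ca$, each of which contributes an opening bracket \emph{together with} a flattened empty tree on one side and a bare $\ca$ on the other; so after padding, a ``descending run'' is not $\oa^u$ but an interleaving of $\oa$'s (from genuine binary productions) with well-matched gadgets such as $\oa\oa\ca$ (from padding), in an order dictated by where along the spine you padded and with multiplicities dictated by the tree. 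For all these interleavings to become syntactically equivalent to a single word, the canonical words must be chosen so that specific absorption identities hold: in the paper these are $e_Le_L\equiv e_L$, $ee_L\equiv e$, $e_Re_R\equiv e_R$, $e_Re\equiv e$, with $e_L=\oa\bigl(\oa^\nu(\oa\oa\ca)^\nu\bigr)^\omega\oa^\nu$ and so on, and their verification uses not only $u^\omega\equiv u^{2\omega}$ but also the non-obvious observation that $\oa$ is a prefix of $\oa\oa\ca$, whence $\oa^{\omega+\nu}(\oa\oa\ca)\equiv\oa^{\nu}(\oa\oa\ca)$. The paper even remarks that these words were found by computer search; they are the entire technical content of the lemma, and your proposal stops just short of them.

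A second, related point: you say each interior block should be made equivalent to ``one common canonical word'' $e$. The block between $t_j$ and $t_{j+1}$ climbs and descends by amounts that vary with $j$ and with the input word, and you cannot normalize these exactly because every inserted bracket has a matching partner living in a different block. The paper resolves this not by normalizing the blocks but by arranging that block $j$ equals $e_R^{r_j}\,e\,e_L^{l_{j+1}}$ for whatever exponents the tree forces, and then absorbing the exponents via the identities above. Your sketch acknowledges the non-locality, but the sentence claiming that after applying Lemma \ref{semigroup} each block is syntactically equivalent to the corresponding $e_L$, $e$, or $e_R$ is precisely the statement that needs proof, and nothing in the proposal supplies it.
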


This proves Theorem \ref{harddir} and thus Theorem \ref{tflattensep}. Indeed,
we will show that $T(L(G_1))$ and $T(L(G_2))$ are separated by $R$ -- then, after
applying Lemma \ref{translemma}, we get our claim.

Consider $w \in L(G_i)$; we have to show that $L(A)$ accepts $T(w)$ iff $i=1$.
From Lemma \ref{pumplemma} we know that there is some $w' \in L(G'_i)$ which
is equivalent to $T(w)$ with respect to $R$. Therefore, 
we know that $T(w) \in R$ iff
$w' \in R$, and since $w' \in L(G'_i)$, $T(w) \in R$ iff $i=1$. \qed

\begin{proof}[of Lemma \ref{pumplemma}]

\def\ob{\heartsuit}
\def\cb{{\overbar{\heartsuit}}}
Let $\ob = \oa \oa \ca$, $\cb = \oa \ca \ca$. Thus, for any 
non-terminal $N' \in V'$, by applying one of productions (
\ref{gp_leftempty}, \ref{gp_rightempty} or \ref{gp_fork}) and then 
(\ref{gp_leaf}), we have $N' \ra^* \ob N' \ca$ and 
$N' \ra^* \oa N' \cb$. Also, let $\nu = \omega-1$.

By applying the above many times to terminals $K'$ and $L'$, we get
$K' \ra^* b_1 K' b_2$ and $C' \ra^* c_1 L' c_2$, where:

\begin{eqnarray}
b_1 &=& (\oa^\nu \ob^\nu)^\omega \oa^\nu \\
b_2 &=& \cb^\nu (\ca^\nu \cb^\nu)^\omega \\
c_1 &=& \ob^\omega (\ob^\nu \oa^\nu)^\omega \\
c_2 &=& (\cb^\nu \ca^\nu)^\omega \ca^\nu
\end{eqnarray}

Now, whenever we have a production $N \ra KL$ in $R$, we can do the following in $R'$:
\begin{equation} \label{goodjob}
N' \ra \oa K'L' \ca \ra^* \oa b_1 K' b_2 c_1 L' c_2 \ca
\end{equation}

This can be written as $N' \ra e_L K' e L' e_R$, where:

\begin{eqnarray}
e_L &=& \oa b_1 \\
e   &=& b_2 c_1 \\
e_R &=& c_2 \ca
\end{eqnarray}

We claim that the padding $T$ given by the words $e_L$, $e$, $e_R$ defined above
satisfies our claim.

Indeed, take $w \in L(G_i)$. Consider the derivation tree of $w$ in $G_i$;
repeat this derivation in $G'_i$, replacing each production $N \ra KL$ with
$N' \ra e_L K' e L' e_R$ according to
the chain of productions (\ref{goodjob}) above, and each production $N \ra t$ with $N' \ra t$
(\ref{gp_terminal}).
In the end, for $w = t_1 \ldots t_n$, we obtain the word $w' \in L(G'_i)$, which
contains the symbols $t_1, \ldots, t_n$ separated with $e$, possibly accompanied
by $e_L$'s on the right side and $e_R$'s on the left side, and with at least one
$e_L$ before $t_1$ and at least one $e_R$ after $t_N$. In other words,

\[
w' = e_L^{l_1} t_1 e_R^{r_1} e e_L^{l_2} t_2 e_R^{r_2} e e_L^{l_3} \ldots
t_n e_R^{r_N} \]

where all $l_i$, $r_i$ are integers, and $l_1,r_n \geq 1$.
Remembering that $\oa$ is a left factor of $\ob$ and thus $\oa^{\omega+\nu} \ob \equiv 
\oa^\nu \ob$, and similarly $\cb \ca^{\omega+\nu} \equiv \cb \ca^\nu$,
it can be checked that the following equivalences hold:

\begin{itemize}
\item $e_L e_L$ is equivalent to $e_L$:
\begin{eqnarray*}
&& e_L e_L = \oa b_1 \oa b_1 = \\ &=&
\oa (\oa^\nu \ob^\nu)^\omega \oa^\nu \oa (\oa^\nu \ob^\nu)^\omega \oa^\nu \equiv \\
&\equiv& \oa (\oa^\nu \ob^\nu)^\omega \oa^\omega (\oa^\nu \ob^\nu)^\omega \oa^\nu \equiv \\
&\equiv& \oa (\oa^\nu \ob^\nu)^\omega (\oa^\nu \ob^\nu)^\omega \oa^\nu \equiv
\oa (\oa^\nu \ob^\nu)^\omega \oa^\nu = \oa b_1 = e_L
\end{eqnarray*}

\item $e_R e_R$ is equivalent to $e_R$:
\begin{eqnarray*}
&& e_R e_R = c_2 \ca c_2 \ca = \\ &=&
(\cb^\nu \ca^\nu)^\omega \ca^\nu \ca (\cb^\nu \ca^\nu)^\omega \ca^\nu \ca \equiv \\
&\equiv& (\cb^\nu \ca^\nu)^\omega \ca^\omega (\cb^\nu \ca^\nu)^\omega \ca^\nu \ca \equiv \\
&\equiv& (\cb^\nu \ca^\nu)^\omega (\cb^\nu \ca^\nu)^\omega \ca^\nu \ca \equiv
(\cb^\nu \ca^\nu)^\omega \ca^\nu \ca = c_2 \ca = e_R
\end{eqnarray*}

\item $e e_L$ is equivalent to $e$:
\begin{eqnarray*}
&& e e_L = b_2 c_1 \oa b_1 = \\ &=&
b_2 \ob^\omega (\ob^\nu \oa^\nu)^\omega \oa (\oa^\nu \ob^\nu)^\omega \oa^\nu \equiv \\
&\equiv& b_2 \ob^\omega (\ob^\nu \oa^\nu)^\omega \oa \oa^\nu (\ob^\nu \oa^\nu)^\omega \equiv \\
&\equiv& b_2 \ob^\omega (\ob^\nu \oa^\nu)^\omega (\ob^\nu \oa^\nu)^\omega \equiv
b_2 \ob^\omega (\ob^\nu \oa^\nu)^\omega = b_2 c_1 = e
\end{eqnarray*}

\item $e_R e$ is equivalent to $e$:
\begin{eqnarray*}
&& e_R e = c_2 \ca b_2 c_1 = \\ &=&
(\cb^\nu \ca^\nu)^\omega \ca^\nu \ca \cb^\nu (\ca^\nu \cb^\nu)^\omega c_1 \equiv \\
&\equiv& (\cb^\nu \ca^\nu)^\omega \cb^\nu (\ca^\nu \cb^\nu)^\omega c_1 \equiv \\
&\equiv& (\cb^\nu \ca^\nu)^\omega (\cb^\nu \ca^\nu)^\omega \cb^\nu c_1 
\equiv (\cb^\nu \ca^\nu)^\omega \cb^\nu c_1 = b_2 c_1 = e
\end{eqnarray*}
\end{itemize}

These rules allow us to reduce all $l_i$ and $r_i$ to zeros (except $l_1$ and $r_n$,
which can be reduced to 1). Hence, the word $w'$ is equivalent to $T(w)$. \qed
\end{proof}

\section{Why binary trees?}

Our proof uses {\it binary} flattened tree grammars -- that is,
productions of form $N \ra \oa N_1 N_2 \ldots N_k \ca$ are allowed only for $k=0$
or $k=2$.

If we allowed such rules for any $k$, a somewhat easier proof would work.
The general structure is the same, but
it is not required to have $G_i$'s in the Chomsky normal form, 
and the grammar $G'$ allows inserting empty brackets with productions 
$E' \ra \oa\ca | \oa E' \ca | \oa E'E' \ca$ before and after every symbol:
$N \ra \oa E' X_1 E' X_2 \ldots X_k E' \ca$, and $e_L=e=e_R=(\oa^\omega \ca\omega)^\omega$
can be obtained by pumping each pair of $\oa\ca$ $\omega$ times, and then
using the rule $E' \ra E'E'$ to make sure that $\oa^\omega \ca^\omega$ appears
$k\omega$ times between each two consecutive terminals.

While such a proof 
was sufficient to solve the problem for visibly pushdown languages, and for
XML encoding of trees, it left us with a craving for more, for the following
reasons.

First, if we consider languages of terms, it is natural to consider
different symbols for nodes with different arities (numbers of children) --
while the simpler proof above heavily uses the fact that the XML encoding cannot
tell whether $\oa$ comes from the structurally significant rule $N \ra \oa E' X_1 E' X_2 \ldots X_k E' \ca$,
or it is a fake inserted by the $X \ra \oa X \ca$ or $E' \ra \oa E' E' \ca$ rules.
Since the arities here are respectively $2k+1,$ 1 and 2, this fails if
the automaton sees them.

Moreover, if we consider the process of flattening a tree as the result of an
automaton which traverses the tree recursively and react to what it is seeing on its path,
it is natural to assume that such an automaton sees the arity, and moreover, 
between returning from the $i$-th child of $v$ and progressing to the $(i+1)$-th child,
the automaton should see that $i$ of $n$ children are progressed. This corresponds
to flattening rules of form $N \ra \oa_0^k X_1 \oa_1^k X_2 \oa_2^k X_3 \ldots X_k \oa_k^k$.

Restricting ourselves to the binary case allows us to solve the problem in full
generality -- while the encoding in the definition of BFG does not explicitely 
say whether $\oa$ comes from 
the ``empty leaf'' rule $E' \ra \oa\ca$ or from one of the binary branching rules,
a finite automaton can easily tell which one is the case by looking at the neighborhood.
Also, while we do not write the infix structural symbols $\oa_1^2$ explicitely,
the automaton can tell that it is at such a branching point iff the last symbol was $\ca$,
and the next one is $\oa$.

In the binary case, a computer program was used to find the 
appropriate words $e,e_L,e_R$ which work in Lemma \ref{pumplemma}.

\section{Conclusion}
We can also consider the separation problem for other classes of languages --
that is, is it decidable whether languages accepted by two BFGs are separable
by a language of class $\calC$? From the proof above, this problem is undecidable
for class $\calC$, as long as the following conditions are satisfied:

\begin{itemize}
\item The respective problem for CFGs is undecidable.
\item The class $\calC$ has the following pumping property: for any $L \in \calC$,
there exists some $\omega$ such that for any words $v$, $w$, $x$,
$v{w^\omega}x \in L$ iff $v{w^{2\omega}}x \in L$. (This is Lemma 
\ref{semigroup}, and it is used in the proof of Lemma \ref{pumplemma}.)
\item If $\pi$ is a homomorphism which ignores a subset of symbols, 
and $L \in \calC$, then $\pi^{-1}(L) \in \calC$. (Used in the proofs of Lemma 
\ref{easydir} and \ref{translemma}.)
\item If $T(t_1 \ldots t_k) = e_L t_1 e t_2 e \ldots e t_k e_R$
for some $s$, and $L \in \calC$, then $T(L) \in \calC$. (Used in the proof of \ref{translemma}.)
\end{itemize}

Hence, the separability problem is also undecidable for other classes of languages,
such as the class of languages recognizable by first-order logic.

\bibliographystyle{plain}
\bibliography{visibly}

\end{document}